\begin{document}

\title{On Fault Tolerance of Circuits with Intermediate Qutrit-assisted Gate Decomposition
}


\author{Ritajit Majumdar \and Amit Saha \and Amlan Chakrabarti \and Susmita Sur-Kolay 
}


\institute{
          Ritajit Majumdar \at
              Advanced Computing \& Microelectronics Unit, Indian Statistical Institute\\
              \email{majumdar.ritajit@gmail.com}
              \and
          Amit Saha \at
              Atos, Pune, India\\
              A. K. Choudhury School of Information Technology, University of Calcutta\\
              \email{abamitsaha@gmail.com}           
           \and
              Amlan Chakrabarti
          \at   A. K. Choudhury School of Information Technology, University of Calcutta
            \and
              Susmita Sur-Kolay \at 
              Advanced Computing \& Microelectronics Unit, Indian Statistical Institute
}


\maketitle

\begin{abstract}
The use of a few intermediate qutrits for efficient decomposition of 3-qubit unitary gates has been proposed, to obtain an exponential reduction in the depth of the decomposed circuit. An intermediate qutrit implies that a qubit is operated as a qutrit in a particular execution cycle. This method, primarily for the NISQ era, treats a qubit as a qutrit only for the duration when it requires access to the state $\ket{2}$ during the computation. In this article, we study the challenges of including fault-tolerance in such a  decomposition. We first show that any qubit that requires access to the state $\ket{2}$ at any point in the circuit, must be encoded using a qutrit quantum error correcting code (QECC), thus resulting in a circuit with both qubits and qutrits at the outset. Since qutrits are noisier than qubits, the former is expected to require higher levels of concatenation to achieve a particular accuracy than that for qubit-only decomposition. Next, we derive analytically (i)  the number of levels of concatenation required for qubit-qutrit and qubit-only decompositions as a function of the probability of error, and (ii) the criterion for which qubit-qutrit decomposition leads to a lower gate count than qubit-only decomposition. We present numerical results for these two types of decomposition and obtain the situation where qubit-qutrit decomposition excels for the example circuit of the quantum adder by considering different values for quantum hardware-noise and non-transversal implementation of the 2-controlled ternary CNOT gate.

\keywords{Intermediate qutrit \and Circuit decomposition \and Fault tolerances} 
\end{abstract}

\section{Introduction}

Current quantum devices are engineered to execute a finite number of one or two-qubit gates, termed as the basis gate set \cite{ibmquantum}. Any arbitrary quantum operator $\mathcal{O}$ is equivalent to a cascade of gates taken from this basis gate set \cite{nielsen2002quantum}. This method is often termed as the decomposition of the operator $\mathcal{O}$. A Toffoli gate is a gate with 3 qubits and finds applications in many important applications of quantum computing such as quantum error correction \cite{nielsen2002quantum}, Grover's algorithm \cite{grover}, etc. Several works have focused on the efficient decomposition of the Toffoli gate due to its significance in quantum computing \cite{Selinger_2013, amy, PhysRevA.87.022328}. A trade-off between depth and qubit-count of Toffoli decomposition has been observed in the literature \cite{amy}: the depth of the decomposed circuit can be reduced by allowing ancilla qubits whereas the depth increases when the qubit-count is kept fixed to the original three qubits.

In \cite{gokhalefirst}, the authors proposed temporary usage of $\ket{2}$, a higher dimension state within a qubit system, and showed an exponential reduction in the depth of the decomposition circuit for a Toffoli gate without any ancilla qubits. This has triggered studies on potential applications of temporary usage of $\ket{2}$ within a qubit system. It has been shown to improve the implementation of arithmetic circuits \cite{saha2022intermediate}, and even eliminate the need for SWAP gates in a limited connectivity quantum hardware \cite{saha2022moving}. This approach was generalized to $d \geq 2$ dimensional quantum circuit for multi-controlled Toffoli gates in \cite{saha2022asymptotically}.

Accessing higher dimensions and applying higher dimensional gates nevertheless invoke more errors in the system than qubits. On the other hand, the exponential reduction in the depth of the circuit lowers the effect of amplitude damping error. Numerical studies \cite{gokhalefirst, saha2022asymptotically, saha2022intermediate} show that the overall error on the system is reduced by this method. In other words, the exponential reduction in depth overshadows the temporary usage of higher dimensions. This method of lowering the depth of the circuit via intermediate qutrits is proposed primarily for the Noisy Intermediate Scale Quantum (NISQ) era \cite{Preskill_2018}, where the number of qubits is up to a few hundred, and the qubits are noisy due to the absence of error correction. However, since this method reduces the depth of the decomposed circuit, it is expected to be impactful even in the fault-tolerant era, where we expect thousands of error-corrected qubits in a quantum device.

Adopting this technique to the fault-tolerant era is not without challenges. A fault-tolerant circuit requires the qubits to be encoded at the very beginning. We show in this article that an encoded qubit cannot be made to access dimension $\ket{2}$ temporarily without entertaining the possibility of errors that remain undetected. This mandates that a qubit, which may access dimension $\ket{2}$ temporarily, is encoded and treated as a qutrit throughout the quantum circuit. In other words, the circuit now becomes a hybrid qubit-qutrit circuit. Since qutrits are, in general, noisier than qubits \cite{fischer2022ancilla}, the number of levels of concatenation is expected to be higher to reach a desired accuracy, which, in turn, increases the resource requirement of the circuit. In this paper, we study the trade-off between the reduced circuit complexity that qutrit-based decomposition allows and the increased circuit complexity that this qutrit-based method leads to in the fault-tolerant domain due to a higher level of concatenation. Here, we first show that the level of concatenation for the circuit is determined by the noise profile of a qutrit. We derive an analytical relation between $k_2$ and $k_3$, which are the levels of concatenation required for qubit-only and qubit-qutrit decomposition respectively to achieve the same accuracy. Finally, we analytically derive an inequality that dictates whether the qubit-qutrit decomposition can lead to a fault-tolerant circuit with lower resources for a given size of the circuit and the probability of error. 

The rest of the paper is organized as follows: in Sec.~\ref{sec:background} we discuss some preliminaries on QECC, fault-tolerance, and qubit-qutrit decomposition. Sec.~\ref{sec:ft_dec} dives into deriving the relation between $k_2$ and $k_3$ for a given probability of error. In Sec.~\ref{sec:resource} we derive the criteria for which qubit-qutrit decomposition leads to a circuit with lower resource, and talk about the possibility of transversal implementation of the gates in binary and ternary Shor and Steane code settings. We shed some light on the possible future research in Sec.~\ref{sec:conclusion}.

\section{Background}
\label{sec:background}
We provide a brief background on quantum error correction, fault-tolerant quantum computation, and qutrit-assisted circuit decomposition in this section.
\subsection{Quantum error correction}
\label{sec:qecc}

In quantum error correction, the information of $k$ qubits is encoded into $n>k$ qubits. A unitary error $\mathcal{E}$ in the qubit system can be expressed as a linear combination of the four Pauli operators $\mathbb{I}, \mathbb{X}, \mathbb{Y}, \mathbb{Z}$ \cite{PhysRevA.52.R2493}
\begin{center}
    $\mathcal{E} = a_1\mathbb{I} + a_2\mathbb{X} + a_3\mathbb{Y} + a_4\mathbb{Z}$
\end{center}
where $a_1$, $a_2$, $a_3$, $a_4$ are scalars. Therefore, any quantum error correcting code (QECC), which can correct any of the Pauli errors, can correct any arbitrary unitary error acting on the qubit \cite{PhysRevA.52.R2493}. A distance $d$ QECC can correct upto $t = \lfloor \frac{d}{2} \rfloor$ errors. A distance $d-$ QECC which encodes $k$ qubits into $n$ qubits is represented as $[[n,k,d]]$ QECC.

An $[[n,k,d]]$ QECC fails if more than $t = \lfloor \frac{d}{2} \rfloor$ errors occur. One solution to this is to use a QECC with a larger distance $d$. However, an arbitrarily long quantum computation requires the ability to correct arbitrarily many errors, where the QECC circuit components themselves are noisy as well. This can be countered via fault tolerance \cite{shor1996fault}.

\subsection{Fault tolerance}
Consider a qubit $q$ such that the probability of failure of each component (e.g., state preparation, gate error, measurement error) is upper bound by $p$. If we use a QECC $\mathcal{C}$ to protect each qubit of the circuit then the probability of error drops to $cp^2$, where $c$ denotes the number of ways two errors can occur. For example, if the 9-qubit QECC by Shor \cite{PhysRevA.52.R2493} is used for error correction, then the probability of failure is
$$1 - [(1-p)^9 + \begin{pmatrix}
9\\
1
\end{pmatrix}p(1-p)^8] \approx 36p^2$$
thus $c = 36$. The logical qubit prepared using the QECC can be further protected via concatenation \cite{nielsen2002quantum}. Here, each logical qubit is again protected using $b$ logical qubits, where $b$ is the number of qubits required for encoding ($b=9$ for Shor Code). Then, by the Threshold Theorem \cite{nielsen2002quantum} after $k_2$ levels of concatenation, the probability of failure drops to $\frac{1}{c}(cp)^{2^{k_2}}$, which quickly goes down to $0$ with increasing levels of concatenation as long as $p < \frac{1}{c}$, called the \emph{threshold} value. If the desired probability of error (also termed as \emph{accuracy}) is $\epsilon$, then the number of levels of concatenations required is
\begin{eqnarray}
\frac{1}{c}(cp)^{2^{k_2}} &=& \epsilon \nonumber \\
\Rightarrow k_2 &=& log_2(log_{cp}(c\epsilon)) \nonumber
\end{eqnarray}

Note that the value of $c$ derived above for Shor Code does not take into account measurement error or the error flow from ancilla qubits via multi-qubit operations. Finding the threshold for a given QECC is, in general, non-trivial, and often numerical methods are applied for the same \cite{ni2020neural, varsamopoulos2019comparing, krastanov2017deep, bhoumik2022ml, debasmita22efficient}. Among the QECCs so far, surface code is shown to have the highest threshold of $1\%$ \cite{fowler2009high}, whereas the probability of error of current quantum devices is not below this threshold value \cite{ibmquantum}. Therefore fault-tolerant is not yet achievable with current quantum devices. However, the noise probability is expected to go below the threshold value in the future when fault-tolerant quantum computers are achievable.

In addition, when referring to quantum gates in the fault-tolerant setting, the term ``transversality" describes a characteristic of quantum error-correcting codes that are especially connected to how encoded quantum gates interact with the encoded information. For a quantum gate $G$, the corresponding logical gate $G_L$ is said to be transversal for a given QECC if $G_L = G^{\otimes d}$, where $d$ is the distance of the QECC. In other words, the logical gate is implemented simply by repeating the actual gate on all the qubits used for encoding. Transversality is a desirable property since it makes the implementation simple and cost-effective. However, it should be noted that not all gates have a transversal implementation in all QECCs. For example, $CNOT$, and $H$ gates can be implemented transversally in the Steane Code, but not the $T$ gate. For such gates, non-transversal implementations are designed which are much more costly than the transversal implementation. The non-transversal implementation of $T$ gate in Steane Code is shown in Fig.~\ref{fig:T_gate}. It is often a challenging task to find non-transversal implementation of a certain gate for a given QECC since it can also be probabilistic \cite{gonzalesftqem}. 

\begin{figure}[htb]
    \centering    \includegraphics[width=0.45\textwidth]{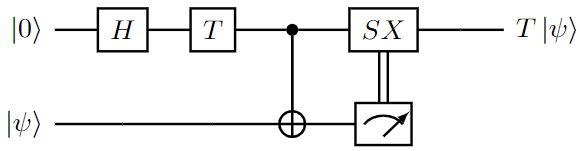}
    \caption{Fault-tolerant implementation of $T$ gate with Steane code}
    \label{fig:T_gate}
\end{figure}

\subsection{Ternary quantum error correction}
Quantum systems are inherently multi-valued. Multi-valued logic provides a larger search space and has been shown to outperform qubit systems in certain domains \cite{saha2018search, saha2021faster, saha2022intermediate}. However, accessing higher dimensions is an engineering challenge, as the noise in the system increases when accessing higher dimensions. Ternary quantum systems, in which an arbitrary qutrit has the mathematical representation of $\alpha\ket{0} + \beta\ket{1} + \gamma\ket{2}$, $\alpha, \beta, \gamma \in \mathbb{C}$, $|\alpha|^2 + |\beta|^2 + |\gamma|^2 = 1$, have been realized experimentally \cite{fischer2022ancilla}. Reliable computing via qutrits requires the implementation of ternary QECCs. In \cite{majumdar2018quantum}, the authors discussed some challenges of the implementation of ternary QECCs, and in \cite{majumdar2022design} the authors showed that any binary QECC can be systematically extended to design its corresponding ternary counterpart. This allows the easy design of ternary QECCs from known binary codes. Furthermore, in this paper, we propose the use of hybrid qubit-qutrit quantum circuits. Such a circuit can be made fault-tolerant by choosing a QECC for the qubits and using its ternary counterpart for the qutrits. This helps to make it an elementary design of fault-tolerant quantum circuits for hybrid qubit-qutrit circuits.

\subsection{Decomposition of gates using higher dimension}

Before discussing the decomposition of the Toffoli gate with the use of higher dimensions, we first discuss the number of gates required for the decomposition of the Toffoli gate in a qubit-only setting. The Toffoli gate can be decomposed in a Clifford + $T$ gate set as per \cite{amy}. A Toffoli gate is decomposed with a T-count of 7 and a Clifford gate-count of 8 as shown in Fig. \ref{fig:tof_selinger}  \cite{amy, Selinger_2013}. 

\begin{figure}[htb]
    \centering
    \includegraphics[width=0.68\textwidth]{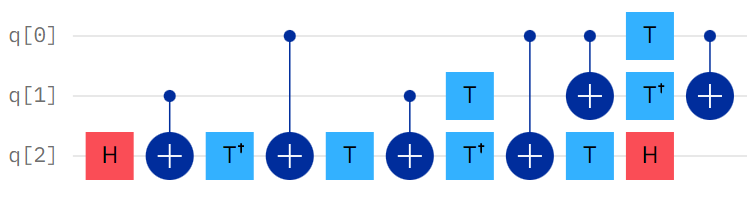}
    \caption{Toffoli decomposition with Clifford+T gates}
    \label{fig:tof_selinger}
\end{figure}

In \cite{gokhale2019asymptotic, 10.1145/3406309}, the authors proposed temporary occupation of the state $\ket{2}$ for decomposition. Maintaining binary input and output allows this circuit construction to be inserted into any pre-existing qubit-only circuits.  A Toffoli decomposition via qutrits has been portrayed in Fig. \ref{tof_qutrit} \cite{gokhale2019asymptotic, 10.1145/3406309}. More specifically, the goal is to carry out a NOT operation on the target qubit (the third qubit $\ket{q_2}$) as long as the two control qubits, are both $\ket{1}$. First, a $\ket{1}$-controlled $X_{+1}$, where $+1$ denotes that the target qubit is incremented by $1 (\text{mod } 3)$, is performed on $\ket{q_0}$ and $\ket{q_1}$, the first and the second qubits. This upgrades $\ket{q_1}$ to $\ket{2}$ if and only if both $\ket{q_0}$ and $\ket{q_1}$ are $\ket{1}$. Then, a $\ket{2}$-controlled $X$ gate is applied to the target qubit $\ket{q_2}$. Therefore, $X$ is executed only when both $\ket{q_0}$ and $\ket{q_1}$ are initially $\ket{1}$. These control qubits are reinstated to their original states by a $\ket{1}$-controlled $X_{-1}$ gate, which reverses the effect of the first gate. That the $\ket{2}$ state from ternary quantum systems can be used instead of an ancilla to store temporary information, is the most important aspect of this decomposition. Thus, to decompose the Toffoli gate, three generalized ternary $CNOT$ gates are sufficient with circuit depth 3. In fact, no $T$ gate is required. Experimental implementation of this decomposition method was shown in \cite{galda2021implementing} using Qiskit pulse simulation.

\begin{figure}[htb]
    \centering
    \includegraphics[scale=.5]{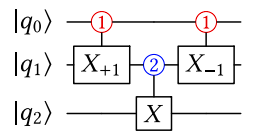}
    \caption{An example of Toffoli decomposition with an intermediate qutrit, where input and output are qubits. The red controls activate on $\ket{1}$
and the blue controls activate on $\ket{2}$. The first gate temporarily elevates $q_1$ to $\ket{2}$ if both $q_0$ and $q_1$ were $\ket{1}$. $X$ operation is then only performed if $q_1$ is $\ket{2}$. The final gate acts as a mirror of the first gate and restores $q_0$ and $q_1$ to their original states \cite{gokhale2019asymptotic}.}
    \label{tof_qutrit}
\end{figure}

This method of Toffoli decomposition is primarily proposed for the NISQ era. However, reduction in depth is always useful, be it the NISQ era or the fault-tolerant era of quantum computation. Therefore, we seek the technicalities required to carry over this method of Toffoli decomposition to the fault-tolerant domain. The primary question is whether the savings in resources (e.g. gate count, lower depth, etc.) are still attainable in a fault-tolerant setting. There are primarily two major concerns for this - (i) ternary systems are more difficult to control, and therefore using the state $\ket{2}$ in a fault-tolerant setting may require significantly more resources due to concatenation, and (ii) whether the 1-controlled and 2-controlled ternary CNOT gates can be implemented transversally in the fault-tolerant setting. In the following part of this paper, we dive deeper into these two queries to understand whether this decomposition of Toffoli gates can be expected to be useful in the fault-tolerant quantum computing era.

\section{Fault-tolerant decomposition using qubit-qutrit gates}
\label{sec:ft_dec}
Consider a quantum circuit $qc$ that has been encoded using a QECC $\mathcal{C}$. This circuit can be decomposed into basis gates using either qubit or qubit-qutrit decomposition method \cite{gokhale2019asymptotic}. Let $p_{2}$ and $p_{2,3}$ be the respective probability of error in these two cases. A few of the qubits behave as qutrits at certain cycles in the qubit-qutrit decomposition \cite{gokhale2019asymptotic}. Literature suggests that these qubits are treated as qutrits only for a limited period of time when it requires access to the state $\ket{2}$. For example, in Fig.~\ref{tof_qutrit}, $\ket{q_1}$ behaves as a qutrit only during the execution of the two qutrit gates on $\ket{q_1}$ and $\ket{q_2}$, and as a qubit otherwise. However,  when we consider error-corrected qubit-qutrit decomposition, we show below in Theorem~\ref{thm:qutrit} that a qubit that requires access to the state $\ket{2}$ at any point in the circuit, must be treated as a qutrit all along. 

\begin{theorem}
\label{thm:qutrit}
Given an error-corrected qubit-qutrit decomposition, a quantum state which requires access to the state $\ket{2}$ at any point in the circuit, must be encoded using a qutrit QECC.
\end{theorem}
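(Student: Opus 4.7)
My plan is to prove the theorem by contradiction: assume the state is encoded with a binary QECC yet is allowed to access $\ket{2}$, and exhibit an undetectable error class. The starting point will be the structure of a binary $[[n,k,d]]$ stabilizer code $\mathcal{C}$, whose stabilizer generators are tensor products of the binary Pauli operators $\{\mathbb{I},\mathbb{X},\mathbb{Y},\mathbb{Z}\}$ and whose codespace lives in $(\mathbb{C}^2)^{\otimes n}\subset(\mathbb{C}^3)^{\otimes n}$. The argument recorded earlier in Sec.~\ref{sec:qecc} guarantees that $\mathcal{C}$ corrects any linear combination of binary Paulis, but it says nothing about error operators that have matrix elements connecting $\ket{0},\ket{1}$ to $\ket{2}$. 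I would make this precise by noting that the qutrit error algebra contains operators such as the generalized shifts $X_{+1},X_{-1}$ that take $\ket{1}\mapsto\ket{2}$, and these lie outside the span of the binary Paulis used by $\mathcal{C}$.

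The central step is then to show that such ``leakage'' errors are invisible to the syndrome measurement of $\mathcal{C}$. First I would argue, using the stabilizer formalism, that a syndrome measurement distinguishes error classes only modulo the stabilizer, and that the stabilizer of a binary code acts as the identity on the $\ket{2}$-subspace; equivalently, measuring a binary stabilizer on a physical qutrit does not separate a $\ket{2}$-component from a $\ket{0}$- or $\ket{1}$-component in a way consistent with the code's decoding map. Consequently, any error that populates the $\ket{2}$ level during the window in which the qubit is promoted to a qutrit produces a state outside the codespace of $\mathcal{C}$ but with a syndrome indistinguishable from either the trivial syndrome or from that of a correctable Pauli error, so the decoder will either ignore the error or ``correct'' it into a logical fault.

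Next I would close off the only apparent escape: one might hope that the qubit-to-qutrit promotion and the subsequent demotion are perfect, so no leakage ever occurs. In the fault-tolerant setting this is disallowed by hypothesis --- every component, including the dimension-lifting operation and every qutrit gate performed during the $\ket{2}$-access window, fails with probability at most $p$, and fault tolerance requires these faults to be detectable and correctable. Since the binary code cannot detect the faults characteristic of these operations (they are precisely leakage-type errors), the concatenation argument recalled in Sec.~\ref{sec:background} breaks down: the error probability does not suppress as $(cp)^{2^{k_2}}$ because a constant fraction of the faults escape the error model $\mathcal{C}$ was designed for.

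To conclude, I would observe that detecting and correcting the full set of physical errors that can act during the $\ket{2}$-access window requires a code whose stabilizer group includes operators sensitive to shifts within the full $\mathbb{C}^3$ on-site space --- i.e., a qutrit stabilizer code of the kind discussed in \cite{majumdar2022design}. Since the state must be protected already before the promotion (errors can occur at any time), and the qutrit code is required during every cycle in which $\ket{2}$ may be populated, the encoding must be a qutrit QECC throughout, proving the theorem. The main obstacle I anticipate is step two: carefully verifying that no clever combination of binary-code syndromes plus auxiliary leakage-detection measurements can substitute for a genuine qutrit code; I would handle this by restricting attention, as in the paper's setup, to the standard concatenation model where only the code's own stabilizers are available, so that any leakage-detection scheme is itself an extension to a ternary code.
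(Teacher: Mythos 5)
Your proposal is correct in substance and reaches the same conclusion by the same overall strategy (contradiction via an error that a binary code cannot handle), but it travels a genuinely different route. The paper's proof is a minimal concrete counterexample: it encodes $\alpha\ket{0}+\beta\ket{1}$ in the three-qubit repetition code, models the promotion to $\ket{2}$ as the ternary shift $X_1$ applied transversally, injects a single shift error on one physical carrier while the state is elevated ($\alpha\ket{111}+\beta\ket{222}\mapsto\alpha\ket{211}+\beta\ket{022}$), and then observes that after demotion by $X_2$ the state $\alpha\ket{100}+\beta\ket{211}$ has leaked out of the binary computational space, so the binary code's recovery map is simply inapplicable. Your argument instead works at the level of the stabilizer formalism: binary stabilizers (extended to act trivially on the $\ket{2}$ level) cannot resolve leakage errors, so the whole class of faults characteristic of the $\ket{2}$-access window escapes the error model, and the threshold-theorem suppression $(cp)^{2^{k}}$ fails. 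What your version buys is generality --- it applies to any binary stabilizer code and any leakage-type fault, not just one shift error on a repetition code --- and it makes explicit why the failure is structural (syndrome blindness) rather than incidental. What it costs is that the key step depends on a convention you must pin down: how the binary Pauli operators are extended to $\mathbb{C}^3$ determines whether ``acts as the identity on the $\ket{2}$-subspace'' is a fact or a choice, and your argument should state that extension explicitly. You correctly identify the residual loophole (auxiliary leakage-detection machinery) and close it by restricting to the standard concatenation model, which is consistent with the paper's setup; the paper sidesteps this issue entirely because its explicit post-demotion state is manifestly outside the domain of any binary recovery operation. Either proof suffices for the theorem as stated.
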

\begin{proof}
Without loss of generality, we consider a single-bit flip error only. The proof naturally follows for more complex errors. Let us consider the two operators $X_1$ and $X_2 = X_1^{-1}$ \cite{majumdar2020approximate, majumdar2022design}, where
\begin{center}
    $X_1 = \begin{pmatrix}
    0 & 0 & 1 \\
    1 & 0 & 0 \\
    0 & 1 & 0
    \end{pmatrix}$.
\end{center}

We prove this by contradiction and assume that it suffices to encode all the qubits using a binary QECC. Note that fault tolerance requires all the qubits to be encoded for error correction at the input stage of the circuit \cite{shor1996fault}. Consider a qubit $\ket{\psi} = \alpha\ket{0} + \beta\ket{1}$, which requires access to the computational state $\ket{2}$  at time cycle $t$, has been encoded as $\ket{\psi}_L = \alpha\ket{000} + \beta\ket{111}$ for error correction. When this qubit requires access to state $\ket{2}$, it can be considered as an operator $X_1$ acting on the system, taking $\ket{\psi}_L$ to $\alpha\ket{111} + \beta\ket{222}$. If a bit error occurs on, say, the first qubit while it is accessing the state $\ket{2}$, the erroneous state becomes $\alpha\ket{211} + \beta\ket{022}$. Restoring this system back to a qubit configuration can be considered as applying $X_1^{-1} = X_2$. Applying $X_2$ on this erroneous state takes the state to $\alpha\ket{100} + \beta\ket{211}$, which has undergone leakage from the computational space. Hence, error correction using binary QECC is no longer possible. \qed
\end{proof}

Therefore, if a qubit is allowed to access the state $\ket{2}$ at any point in the circuit, it must be treated as a qutrit from the input stage and encoded accordingly for error correction.

Theorem~\ref{thm:qutrit} does not pose any restrictions on the qubits that do not require access to the state $\ket{2}$ to be encoded using binary QECC for error correction. For example, in Fig.~\ref{tof_qutrit}, $\ket{q_0}$ and $\ket{q_2}$ may be encoded using binary QECC, but $\ket{q_1}$ must be encoded using a ternary QECC.


A general notion is that qutrits and ternary quantum gates are noisier than qubits and binary quantum gates respectively \cite{fischer2022ancilla}. Therefore, it may be possible to attain an accuracy of $\epsilon$ using $k_2$ and $k_3$ levels of concatenation for qubits and qutrits respectively, where $k_2 \leq k_3$. A natural inference is to use fewer levels of concatenation for qubits than for qutrits in the fault-tolerant circuit. In other words, the circuit complexity can be reduced if one can use $k_2< k_3$ levels of concatenation for qubits and qutrits respectively, to acquire the same accuracy for both qubits and qutrits. However, Theorem~\ref{thm:level_concat} below asserts otherwise.

\begin{theorem}
\label{thm:level_concat}
Both the qubits and qutrits in a hybrid qubit-qutrit circuit must be encoded with the same number of levels of concatenation for fault-tolerant implementation irrespective of their respective probability of error.
\end{theorem}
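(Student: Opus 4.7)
The plan is structural: identify a gate in the hybrid circuit whose two operands live on differently-encoded lines, and argue that any fault-tolerant realization of such a gate forces $k_2 = k_3$, independent of the underlying physical error rates.

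By Theorem~\ref{thm:qutrit} the hybrid decomposition (e.g.\ the $\ket{1}$-controlled $X_{+1}$ in Fig.~\ref{tof_qutrit}) necessarily contains at least one gate $G$ with one input on a binary-encoded line and the other on a ternary-encoded line. I would fix $G$ as the object of analysis. The Threshold Theorem works inductively: at each level $\ell$ of concatenation, syndrome extraction corrects faults occurring within the level-$(\ell-1)$ subblocks, reducing the effective error from $p$ to $cp^2$. For this induction to be consistent across $G$, both operands of $G$ must be simultaneously protected at every level; in particular, the fault-tolerant realization of $G$ is a single atomic two-particle subroutine that acts on both operands together, so both must sit at the same level of the encoding tree whenever $G$ fires.

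I would then argue by contradiction that $k_2 \neq k_3$ breaks the inductive guarantee. Suppose $k_2 < k_3$. Between the qubit's outermost level $k_2$ and the qutrit's outermost level $k_3$ the qubit receives no further syndrome extraction, so during the fault-tolerant invocation of $G$ at any level $\ell$ with $k_2 < \ell \leq k_3$ the qubit operand is unprotected at that level. A single physical fault at that level can then propagate from the qubit side of $G$ into the qutrit side and, by the leakage mechanism identified in Theorem~\ref{thm:qutrit}, escape the correctable space of the ternary code, defeating the $\ell$-level error-suppression guarantee. The symmetric case $k_2 > k_3$ is identical with the roles reversed. Hence fault-tolerance forces $k_2 = k_3$; the common level is then chosen afterwards as the maximum of the two levels that the subsystems would need individually to meet the target accuracy, and this conclusion is manifestly independent of $p_2$ and $p_3$.

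I expect the main obstacle to be rigorously ruling out clever schemes in which one artificially lifts or compresses one side of $G$ to equalize the levels just in time. My plan is to reduce any such scheme to the leakage phenomenon of Theorem~\ref{thm:qutrit}: any bridging subroutine that moves an operand between level $k_2$ and level $k_3$ must at some point act on partially-unencoded data with an operator of the form $X_1$ or $X_1^{-1}$, producing a fault that the ternary code cannot correct. Once this structural claim is pinned down, the remainder is a direct application of the Threshold Theorem to each subsystem in isolation.
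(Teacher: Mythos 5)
Your setup is the right one --- you fix a two-operand gate $G$ whose inputs live on a binary-encoded line and a ternary-encoded line with $k_2 < k_3$ levels, and you try to show that fault tolerance fails. But the mechanism you give for the failure is not the one that actually works, and as stated it has a gap. The paper's argument is a counting argument: with $k_2 < k_3$ the block encoding the qubit operand contains strictly fewer physical carriers than the block encoding the qutrit operand, so any transversal realization of $G$ must, by the pigeonhole principle, couple at least one physical qubit to \emph{two or more} distinct physical qutrits in the same code block. A single fault on that qubit then propagates to multiple locations inside one qutrit block, exceeding what the ternary code can correct --- this is the canonical violation of the fault-tolerance design rule, and it is structural, hence independent of $p_2$ and $p_3$. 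Your proposal never makes this counting step, and it is the step that does all the work.

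In its place you offer two claims that do not hold up. First, you say a single fault on the under-protected qubit side propagates into the qutrit side and, ``by the leakage mechanism identified in Theorem~\ref{thm:qutrit},'' escapes the correctable space of the ternary code. That leakage phenomenon concerns a \emph{binary} code trying to represent a state that has left $\mathrm{span}\{\ket{0},\ket{1}\}$; an error propagating into a ternary-encoded block is just an ordinary correctable qutrit error, and one such error per block is exactly what the ternary QECC is designed to fix. One fault propagating to one qutrit breaks nothing; you need it to hit several qutrits in the same block, which is what the pigeonhole count supplies. Second, the observation that the qubit ``receives no further syndrome extraction'' above level $k_2$ only says the qubit attains a worse residual error rate $\frac{1}{c}(cp)^{2^{k_2}}$; that degrades accuracy but is not by itself a violation of fault tolerance, and an argument resting on it could not be ``irrespective of the probability of error'' as the theorem requires. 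I would replace the leakage appeal with the block-size mismatch count; once that is in place the contradiction is immediate and your concern about ``bridging subroutines'' largely evaporates, since any such bridge is itself a multi-carrier gate between unequal blocks and falls to the same count.
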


\begin{proof}
Let us assume on the contrary that the qubits and qutrits can be concatenated using $k_2$ and $k_3$ levels of concatenation, where $k_2 \neq k_3$. WLOG, we assume $k_2 < k_3$. Let $G$ be a transversal two-qubit gate for the QECC used, operating over $q_2$ and $q_3$, where $q_2$ $(q_3)$ is a qubit (qutrit). However, since $k_2 < k_3$, the number of qubits encoding $q_2$ is less than the number of qutrits encoding $q_3$. Therefore, by the pigeonhole principle, there exists at least one qubit on which two encoded gates, involving two distinct qutrits, operate. This violates the principle of fault-tolerance since error in this qubit can flow to multiple qutrits. \qed
\end{proof}

Theorem~\ref{thm:level_concat} asserts that the level of concatenation, and hence the size of the circuit, is governed by the probability of error on qutrits.

Thus the fault-tolerant representation of the qubit-qutrit decomposition (i) has a lower depth of the circuit due to its more efficient decomposition, and (ii) leads to circuits with higher resource requirements since it is governed by the probability of errors on qutrits. These two are contrasting characteristics. The aim, therefore, is to determine the criterion for which the increase in the size of the circuit is overshadowed by the number of gates reduced due to more efficient qubit-qutrit decomposition.

In Theorems~\ref{thm:same_k} and ~\ref{thm:same_epsilon} we address: (i) the accuracy of the two types of decomposition if both use the same number of levels of concatenation and (ii) the increase in the number of levels of concatenation required for the qubit-qutrit decomposition to obtain the same accuracy as that of the qubit-only decomposition.

\begin{theorem}
\label{thm:same_k}
Given a quantum circuit $C$, let $C_2$ and $C_{2,3}$ be the two decompositions for $C$ involving qubit gates only and qubit-qutrit gates, with error probabilities $p_2$ and $p_{2,3}$ respectively.  After $k$ levels of concatenation in both, the accuracy $\epsilon_3=\delta \cdot \epsilon_2$ obtained by $C_{2,3}$, where $\epsilon_2$ is the accuracy obtained by $C_2$ and $\delta > 0$, is given by
\begin{equation*}
     log(\delta) = 2^k log(\frac{c_3.p_{2,3}}{c_2.p_2}) + log(\frac{c_2}{c_3})
\end{equation*}
where $\frac{1}{c_2}$ and $\frac{1}{c_3}$ are the thresholds of the binary and the ternary QECCs respectively.
\end{theorem}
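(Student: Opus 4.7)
The plan is to invoke the concatenation formula $\tfrac{1}{c}(cp)^{2^k}$ recalled in Section~\ref{sec:background} separately for each of the two decompositions and then read off their ratio. Theorem~\ref{thm:level_concat} has already pinned down that the binary and ternary wires must share a common concatenation level $k$, so the two effective error probabilities $\epsilon_2$ and $\epsilon_3$ differ only in which threshold constant and which base gate-error rate appear in the exponent. This is essentially an algebraic proof once the correct constants are identified.

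First I would write down the two effective error probabilities. For $C_2$, concatenating the binary QECC (threshold $1/c_2$) $k$ times with gate-level error $p_2$ yields $\epsilon_2 = \tfrac{1}{c_2}(c_2 p_2)^{2^k}$. For $C_{2,3}$, using the ternary counterpart of the code (threshold $1/c_3$) with gate-level error $p_{2,3}$ yields $\epsilon_3 = \tfrac{1}{c_3}(c_3 p_{2,3})^{2^k}$. Then I would form $\delta := \epsilon_3/\epsilon_2$, collect the power-of-$2^k$ terms into a single base $\tfrac{c_3 p_{2,3}}{c_2 p_2}$ with the prefactor $\tfrac{c_2}{c_3}$ left over, and take $\log$ of both sides. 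The logarithm cleanly separates the two contributions, giving
\begin{equation*}
\log \delta \;=\; \log\!\frac{c_2}{c_3} \;+\; 2^{k}\,\log\!\frac{c_3\, p_{2,3}}{c_2\, p_2},
\end{equation*}
which is exactly the stated identity.

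The only point requiring justification rather than computation is the use of a single effective constant $c_3$ for the hybrid qubit-qutrit circuit $C_{2,3}$. Because $C_{2,3}$ contains both binary and ternary wires, one might worry that two distinct $c$'s enter the concatenated expression. However, Theorem~\ref{thm:level_concat} forces a uniform level of concatenation dictated by the more demanding, qutrit side, and the "number of ways two errors can occur" at each level is then bounded by the ternary counting constant $c_3$; substituting this dominant constant into the standard single-code threshold expression is exactly what the hypothesis of the theorem warrants. Once this modelling step is accepted, the rest of the argument is a one-line substitution, so I do not expect any genuine obstacle.
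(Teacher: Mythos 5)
Your proposal is correct and follows the same route as the paper: apply the threshold-theorem expression $\frac{1}{c}(cp)^{2^{k}}$ to each decomposition, form the ratio $\delta=\epsilon_3/\epsilon_2$, and take logarithms. In fact your version is slightly more complete than the paper's own appendix proof, which silently sets $c_2=c_3=c$ and only derives the simplified identity $\log\delta = 2^{k}\log(p_{2,3}/p_2)$, whereas you carry the two distinct thresholds through and land exactly on the stated equation $\log\delta = 2^{k}\log\frac{c_3 p_{2,3}}{c_2 p_2} + \log\frac{c_2}{c_3}$.
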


\begin{proof}
Follows from the Threshold Theorem \cite{nielsen2002quantum}; see Appendix~\ref{app:same_k}.
\end{proof}

In all the proofs, the logarithm is with respect to base 2 for the sake of ease of calculations and simpler forms of the equations. Note that any other base for the logarithm is equally acceptable for all the calculations. Moreover, in all the theorems henceforth, we have assumed two different thresholds for binary and ternary QECCs. However, if  $c_2 = c_3$, the equations can be modified accordingly to much simpler forms. For example, the equation from this theorem becomes 
\begin{equation*}
    log(\delta) = 2^k log(\frac{p_{2,3}}{p_2})
\end{equation*}


\begin{theorem}
\label{thm:same_epsilon}
Given a quantum circuit $C$, let $C_2$ and $C_{2,3}$ be the two decompositions for $C$ involving qubit gates only and qubit-qutrit gates, with error probabilities $p_2$ and $p_{2,3}$ respectively. Then, $C_2$ as well as $C_{2,3}$ achieves an accuracy of $\epsilon$ after $k_2$ and $k_3$ levels of concatenation respectively, where
\begin{equation*}
    k_3 = \lceil k_2 + log(\frac{log(c_2.p_2) - \frac{1}{2^{k_2}}log(\frac{c_2}{c_3})}{log(\delta) + log(c_3.p_{2})}) \rceil
\end{equation*}

and $\frac{1}{c_2}$,  $\frac{1}{c_3}$ are the thresholds of the binary and ternary QECCs respectively.
\end{theorem}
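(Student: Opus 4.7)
The plan is to apply the Threshold Theorem in its standard form to each decomposition and then equate the two residual error probabilities to the common target accuracy $\epsilon$. By the Threshold Theorem, after $k_2$ levels of concatenation of $C_2$ and $k_3$ levels of $C_{2,3}$, the residual error probabilities are $c_2^{-1}(c_2 p_2)^{2^{k_2}}$ and $c_3^{-1}(c_3 p_{2,3})^{2^{k_3}}$ respectively, and both equal $\epsilon$ by hypothesis. Setting them equal eliminates $\epsilon$ and reduces the problem to a single algebraic identity that can be solved for $k_3$ in terms of $k_2$. This is structurally the same manoeuvre used in Theorem \ref{thm:same_k}, except that here the common concatenation level is replaced by two free variables $k_2$ and $k_3$.

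First I would take $\log$ of the equality $c_2^{-1}(c_2 p_2)^{2^{k_2}} = c_3^{-1}(c_3 p_{2,3})^{2^{k_3}}$ to get
\begin{equation*}
-\log c_2 + 2^{k_2}\log(c_2 p_2) = -\log c_3 + 2^{k_3}\log(c_3 p_{2,3}),
\end{equation*}
move the $-\log c_i$ terms onto one side to collect $\log(c_3/c_2)$, and divide through by $2^{k_2}\log(c_3 p_{2,3})$ in order to isolate $2^{k_3 - k_2}$. Flipping the sign of the log-ratio via $\log(c_3/c_2) = -\log(c_2/c_3)$ brings the expression to the form
\begin{equation*}
2^{k_3 - k_2} = \frac{\log(c_2 p_2) - 2^{-k_2}\log(c_2/c_3)}{\log(c_3 p_{2,3})}.
\end{equation*}
Taking a further logarithm yields $k_3 - k_2$ exactly, and the ceiling operator in the statement just enforces integrality of the number of concatenation levels.

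To match the denominator stated in the theorem, I would invoke the identification $p_{2,3} = \delta\, p_2$, which rewrites $\log(c_3 p_{2,3}) = \log \delta + \log(c_3 p_2)$; here $\delta$ plays the role of a multiplicative penalty on the per-component error rate incurred by promoting qubits to qutrits, parallel to its appearance in Theorem \ref{thm:same_k}. The main obstacle I anticipate is not the algebra itself but the bookkeeping around $\delta$: one has to reconcile its use in Theorem \ref{thm:same_k} as a ratio of accuracies at a common concatenation level with its use here as a ratio of per-component error probabilities, and then verify that both numerator and denominator inside the outer $\log$ have the same sign (note that for $p < 1/c$, each $\log(cp)$ is negative, so the quotient is positive and its logarithm is well defined). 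Once the signs are tracked correctly and the ceiling is applied, the claimed closed form follows directly.
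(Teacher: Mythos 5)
Your proposal is correct and follows essentially the same route as the paper: equate the two Threshold-Theorem expressions $c_2^{-1}(c_2p_2)^{2^{k_2}} = c_3^{-1}(c_3p_{2,3})^{2^{k_3}}$, take logarithms, isolate $2^{k_3-k_2}$, and substitute $p_{2,3}=\delta\, p_2$. In fact your derivation is \emph{more} faithful to the stated theorem than the paper's own appendix proof, which silently sets $c_2=c_3=c$ and therefore never produces the $\frac{1}{2^{k_2}}\log(c_2/c_3)$ term appearing in the theorem; you carry the two distinct thresholds through and recover that term exactly. Your side remarks are also on point: the paper does redefine $\delta$ for this theorem as the ratio of per-component error probabilities (``$c.p_{2,3}=\delta.c.p_2$'') rather than the ratio of accuracies used in Theorem~\ref{thm:same_k}, and the sign check on $\log(c\,p)<0$ is exactly the justification the paper invokes before dividing.
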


Before presenting the proof of this theorem, we emphasize here that fault-tolerance is attainable only if $c.p < 1$, where $p$ is the probability of error. Here, we require both $c.p_2$ and $c.p_{2,3} = \delta.c.p_2$ to be less than $1$.

\begin{proof}
Follows from the Threshold Theorem \cite{nielsen2002quantum}; see Appendix~\ref{app:same_epsilon}.
\end{proof}

Currently, surface code is shown to have a threshold of $1\% = 0.01$ \cite{fowler2009high}.  For IBM Quantum devices, $CNOT$ is one of the most prominent sources of error in the quantum systems today, having an error threshold $> 0.01$. Therefore, current quantum devices do not satisfy the fault-tolerance requirement. This prohibits us from providing an experimental comparison of fault-tolerance on the qubit-qutrit decomposition and the qubit-only decomposition. Therefore, for the rest of the paper, any numerical values will assume a futuristic scenario, where $p$ is sufficiently low enough so that $c.p < 1$.

For the sake of better visualization, in Table~\ref{tab:compare}, we assume $c.p_{2, 3} = 0.9$ and $0.5$, and vary $\delta$ to determine the difference in the level of concatenations $k_3$ and $k_2$.
\begin{table}[htb]
    \centering
    \caption{Difference in levels of concatenation $\lceil k_3 - k_2 \rceil$ with varying $\delta$}
    \begin{tabular}{|c|c|c|c|}
        \hline
        $c.p_{2, 3}$ & $\delta$ & $c.p_2$ & $\lceil k_3 - k_2 \rceil$\\
        \hline
        \multirow{6}{*}{0.9} & 1.5 & 0.6 & 3\\
        \cline{2-4}
        & 2 & 0.45 & 3\\
        \cline{2-4}
        & 3 & 0.3 & 4\\
        \cline{2-4}
        & 4 & 0.225 & 4\\
        \cline{2-4}
        & 5 & 0.18 & 5\\
        \hline
        \multirow{5}{*}{0.5} & 1.5 & 0.33 & 1\\
        \cline{2-4}
        & 2 & 0.25 & 1\\
        \cline{2-4}
        & 3 & 0.167 & 2\\
        \cline{2-4}
        & 4 & 0.125 & 2\\
        \cline{2-4}
        & 5 & 0.1 & 2\\
        \hline
    \end{tabular}
    \label{tab:compare}
\end{table}
Since the value of $p_{2,3} \geq p_{2}$, it is expected that $\lceil k_3 - k_2 \rceil \geq 1$. Increasing levels of concatenation increases the size of the resulting circuit exponentially. On the other hand, using qubit-qutrit decomposition is shown to be able to reduce the depth of the circuit exponentially \cite{gokhalefirst}. This implies that the exponential decomposition in the fault-tolerant scenario is useful only if the reduction in depth acquired by such a decomposition overshadows the increase in the circuit size, i.e., the size of the resultant fault-tolerant circuit is not bigger than the qubit-only decomposition. This not only depends on the number of gates in the circuit but also on whether the 1-controlled and 2-controlled ternary CNOT gates can be implemented transversally.

\section{Resource estimation of fault-tolerant circuits}
\label{sec:resource}

Let us assume that the error-correcting code used requires at most $G$ gates to encode a single gate at each level of concatenation in a fault-tolerant manner. Therefore, the size of a circuit consisting of $R$ gates after $k$ levels of concatenation is $G^k \cdot R$. Let the number of gates in qubit-qutrit and qubit-only decomposition of a circuit be $R_{2,3}$ and $R_2$ respectively. Then, we require
\begin{eqnarray}
\label{eq:size}
G^{k_3}\cdot R_{2,3} & \leq & G^{k_2}\cdot R_2 \nonumber \\
\Rightarrow R_{2,3} & \leq & \frac{R_2}{G^{\lceil k_3 - k_2 \rceil}}
\end{eqnarray}

Eq.~(\ref{eq:size}) provides the criteria for which the qubit-qutrit decomposition can result in a smaller circuit even though it requires more levels of concatenation. However, here $G$ is an upper bound on the gate count. This upper bound generally depends on the gates which require non-transversal implementation. Here, instead of comparing the upper bound, we compare the exact gate count for transversal gates, and an estimate of the gate count for non-transversal gates, of certain representative circuits.

Consider a circuit where $\mathcal{G}$ denotes the set of individual gates used in the circuit. Let $n_g$ denote the number of gate type $g \in \mathcal{G}$ in the circuit. Then, after $k$ levels of concatenation, the total number of gates $N_{FT}$ in the fault-tolerant circuit is given in Eq.~(\ref{eq:gate_count}), where $\kappa_g$ denotes the number of gates required for the fault-tolerant implementation of the gate $g$.
\begin{equation}
    \label{eq:gate_count}
    N_{FT} = (\sum_{g \in \mathcal{G}} \kappa_g n_g)^k
\end{equation}

\begin{theorem}
\label{thm:resource}
Let $\mathcal{G}_2$ and $\mathcal{G}_{2,3}$ be the set of types of gates in a circuit, realized respectively by using the qubit-only and the qubit-qutrit decomposition and $n_g$ be the number of gates of type $g$. Then the qubit-qutrit decomposition leads to a smaller number of gates if
\begin{equation}
log(\frac{log(c_2.p_2) - \frac{1}{2^{k_2}}log(\frac{c_2}{c_3})}{log(\delta) + log(c_3.p_{2})}) \leq k_2 \cdot \frac{log\frac{\sum_{g \in \mathcal{G}_2} (\kappa_g n_g)}{\sum_{g \in \mathcal{G}_{2,3}}(\kappa_g n_g)}}{log\sum_{g \in \mathcal{G}_{2,3}}(\kappa_g n_g)}
\label{eq:resource_req}
\end{equation}
where $\frac{1}{c_2}$ and $\frac{1}{c_3}$ are the thresholds of the binary and ternary QECCs respectively.
\end{theorem}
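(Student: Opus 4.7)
The plan is to directly compare the total fault-tolerant gate counts of the two decompositions using Eq.~(\ref{eq:gate_count}), then eliminate the unknown $k_3$ via Theorem~\ref{thm:same_epsilon}. Let me abbreviate $A := \sum_{g \in \mathcal{G}_2}\kappa_g n_g$ and $B := \sum_{g \in \mathcal{G}_{2,3}}\kappa_g n_g$, so that the two fault-tolerant circuits contain $A^{k_2}$ and $B^{k_3}$ gates respectively. The condition that the qubit-qutrit decomposition is no larger than the qubit-only decomposition is therefore $B^{k_3} \leq A^{k_2}$.

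First I would take logarithms (noting $B \geq 2$ so $\log B > 0$ and the inequality direction is preserved) to obtain $k_3 \log B \leq k_2 \log A$, which rearranges as
\begin{equation*}
k_3 - k_2 \;\leq\; k_2\cdot\frac{\log(A/B)}{\log B}\;=\; k_2\cdot\frac{\log\!\bigl(\sum_{g\in\mathcal{G}_2}\kappa_g n_g\bigr)-\log\!\bigl(\sum_{g\in\mathcal{G}_{2,3}}\kappa_g n_g\bigr)}{\log\sum_{g\in\mathcal{G}_{2,3}}\kappa_g n_g}.
\end{equation*}
This is exactly the right-hand side of Eq.~(\ref{eq:resource_req}).

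Next I would invoke Theorem~\ref{thm:same_epsilon}, which, dropping the ceiling, gives the closed form
\begin{equation*}
k_3 - k_2 \;=\; \log\!\left(\frac{\log(c_2 p_2) - \frac{1}{2^{k_2}}\log(\frac{c_2}{c_3})}{\log(\delta) + \log(c_3 p_2)}\right),
\end{equation*}
which is the left-hand side of Eq.~(\ref{eq:resource_req}). Substituting this expression for $k_3 - k_2$ into the inequality above yields the claim.

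The only subtlety, and the step I expect to be the main obstacle to a fully rigorous write-up, is the handling of the ceiling in Theorem~\ref{thm:same_epsilon}: in principle one should work with $k_3 = \lceil k_2 + \Delta \rceil$ rather than $k_2 + \Delta$, which can add at most $1$ to the true difference. One therefore has to argue either that (i) this rounding is absorbed by the inequality being a sufficient (not necessary) condition, or (ii) a slightly strengthened version of Eq.~(\ref{eq:resource_req}) with a $+1$ correction is what is actually needed. I would flag this explicitly and then proceed with the cleaner real-valued form that the statement uses. A second, minor point is verifying that the denominator $\log(\delta)+\log(c_3 p_2)$ is negative (so that its argument is positive and the outer logarithm is well defined), which follows from the fault-tolerance requirement $c_3 p_{2,3}=\delta c_3 p_2<1$ noted just before Theorem~\ref{thm:same_epsilon}.
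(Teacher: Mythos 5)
Your proposal follows essentially the same route as the paper's own proof: compare the concatenated gate counts $B^{k_3}\leq A^{k_2}$, take logarithms to isolate $k_3-k_2$ on the left, and substitute the expression for $k_3-k_2$ from Theorem~\ref{thm:same_epsilon}. Your explicit flagging of the ceiling in $k_3$ and of the sign of $\log(\delta)+\log(c_3 p_2)$ addresses subtleties the paper's proof silently glosses over, so the write-up is, if anything, slightly more careful than the original.
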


\begin{proof}
See Appendix~\ref{app:resource}.
\end{proof}

Theorem~\ref{thm:resource} provides a relation involving $p_2$ the probability of error, $\frac{1}{c}$ the threshold of the QECC used, $\delta$ the ratio of error probabilities for qutrits and qubits, $k_2$ the levels of the concatenation of qubit-only system, and the gate counts of both types of decompositions.  The value of $\kappa_g$ for each gate $g$ depends on whether it can be implemented transversally or not. For gates that can be implemented transversally, $\kappa_g$ is equal to the distance of the QECC. For other cases, the value of $\kappa_g$ can increase significantly \cite{majumdar2017method} because the implementation may even be probabilistic \cite{gonzalesftqem}.

Next, we consider two well-known codes, namely the Shor and Steane codes, and show that while the 1-controlled ternary CNOT can be implemented transversally in both of them, although in a restricted setting for the Shor code, it is not so the 2-controlled ternary CNOT gate.

\subsection{Transversal implementation of 1-controlled ternary CNOT gate}
In this subsection we discuss the transversal implementation of the 1-controlled ternary CNOT gate for Steane and Shor codes.

\subsubsection{Implementation in Steane code}
Ternary Steane code is a 7-qutrit code whose stabilizers are  \cite{majumdar2023designing}:
\begin{eqnarray}
\label{eq:steane}
S_1 &=& I \otimes I \otimes I \otimes X_1 \otimes X_1 \otimes X_1 \otimes X_1 \nonumber\\ 
S_2 &=& I \otimes X_1 \otimes X_1 \otimes I \otimes I \otimes X_1 \otimes X_1 \nonumber \\
S_3 &=& X_1 \otimes I \otimes X_1 \otimes I \otimes X_1 \otimes I \otimes X_1 \nonumber \\
S_4 &=& I \otimes I \otimes I \otimes Z_1 \otimes Z_2 \otimes Z_2 \otimes Z_1 \nonumber \\
S_5 &=& I \otimes Z_1 \otimes Z_2 \otimes I \otimes I \otimes Z_2 \otimes Z_1 \nonumber \\
S_6 &=& Z_1 \otimes I \otimes Z_2 \otimes I \otimes Z_2 \otimes I \otimes Z_1
\end{eqnarray}

Here, \quad 
    $X_1\ket{j} = \ket{j+1} ~\text{mod 3}$, \quad \quad $Z_1\ket{j} = \omega^j \ket{j}$ \quad $Z_2 = Z_1 Z_1$.
The stabilizers of binary Steane code are similar with only $X$ and $Z$ operators, where \quad 
    $X\ket{j} = \ket{j+1} ~\text{mod 3}$, \quad \quad $Z\ket{j} = (-1)^j \ket{j}$.
For a qubit-qutrit setting with binary and ternary Steane code, it was verified that the 1-controlled ternary CNOT gate can be implemented transversally, i.e., $\text{CNOT}_L(\ket{i}_L \ket{j}_L)$, $i \in \{0,1\}$, $j \in \{0,1,2\}$ can be implemented by executing CNOT gate individually on the 7 qubits used to encode $\ket{i}_L$ and $\ket{j}_L$. Note that in this case, the control is always a qubit, and the target is a qutrit.

\subsubsection{Implementation in Shor code}


 For binary systems, let us consider two logical qubits $\ket{1}_L$ and $\ket{0}_L$ encoded using the Shor code, where
\begin{eqnarray*}
    \ket{0}_L &=& \ket{000} + \ket{111} \\
    \ket{1}_L &=& \ket{000} - \ket{111}.
\end{eqnarray*}

We have shown only the first three qubits of the nine qubits encoding and have removed the normalization factor for brevity. An encoded CNOT gate $CNOT_L(\ket{1}_L, \ket{0}_L)$ is shown in Fig.~\ref{fig:encoded_cnot}. Note that the control and target are reversed in the encoded CNOT.

\begin{figure}
    \centering
    \includegraphics[scale=0.5]{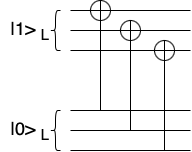}
    \caption{An encoded CNOT gate for a binary quantum system using Shor code}
    \label{fig:encoded_cnot}
\end{figure}

In a similar scenario, let us assume an encoded CNOT gate between two logical qubits $\ket{2}_L$ and $\ket{0}_L$ where the former is encoded using qutrit Shor code and the latter using qubit Shor code. As before, we shall ignore the normalization factor and only report the first three qubits (qutrits) for brevity.
\begin{eqnarray*}
    \ket{2}_L &=& \ket{000} + \omega^2\ket{111} + \omega \ket{111} \\
    \ket{0}_L &=& \ket{000} + \ket{111}
\end{eqnarray*}

The encoded CNOT operation between these two states (where the control is on $\ket{0}_L$ and the target is on $\ket{2}_L$ similar to Fig.~\ref{fig:encoded_cnot}) is
\begin{eqnarray*}
    && CNOT[(\ket{000} + \ket{111})(\ket{000} + \omega^2\ket{111} + \omega \ket{111})] \\
    &=& CNOT[\ket{000}(\ket{000} + \omega^2\ket{111} + \omega \ket{111})] + CNOT[\ket{111}(\ket{000} + \omega^2\ket{111} + \omega \ket{111})] \\
    &=& \ket{000}(\ket{000} + \omega^2\ket{111} + \omega \ket{111}) + \ket{111}(\ket{111} + \omega^2\ket{222} + \omega \ket{000}) \\
    &=& \ket{000}(\ket{000} + \omega^2\ket{111} + \omega \ket{111}) + \omega\ket{111}(\ket{000} + \omega^2\ket{111} + \omega \ket{111}\\
    &=& (\ket{000} + \omega\ket{111})(\ket{000} + \omega^2\ket{111} + \omega \ket{111})
\end{eqnarray*}
We note here that the state of $\ket{0}_L$ has deviated from the codespace and hence will be treated as an error. 

This issue may be tackled by encoding all the states as qutrits from the very beginning, irrespective of whether these require access to state $\ket{2}$. Nevertheless, this deviates from the qubit-qutrit setting which is the primary focus of this article. Therefore, we do not delve deeper along this line here.

\subsubsection{Implementation of 2-controlled ternary CNOT gate}
The 2-controlled ternary CNOT gate does not seem to adhere to any transversal implementation for both Steane and Shor QECCs. For Steane code, once again a brute force approach may be taken to check this. For Shor code, we already established that all the states must be qutrit to ensure transversal implementation of 1-controlled ternary CNOT gate, but the action of a 2-controlled ternary CNOT is $\ket{0} \leftrightarrow \ket{1}$ when the control qutrit is $\ket{2}$. This requirement does not allow a transversal setting since it has to ignore the state $\ket{2}$ in the target. This is not achievable by general ternary NOT gates which operate as addition modulo 3.

Therefore, for both of these QECCs, extra gates and possibly extra qutrits are needed to achieve the required action of a 2-controlled ternary CNOT gate, making its implementation non-transversal. For the following numerical study, we vary the value of $\kappa_g$ for these gates to determine the value for which the fault-tolerant qubit-qutrit decomposition still leads to a lower number of gates as opposed to the qubit-only decomposition.

\subsection{Overview of circuit decomposition for the adder}

As an illustration, we evaluate the inequality of Theorem~\ref{thm:resource} for an adder circuit. In this circuit, we decompose the Toffoli gate as per Fig.~\ref{fig:tof_selinger}. Let $Toffoli\_count$, $CNOT\_count$, $T\_count$, and $H\_count$ denote respectively the total count of Toffoli, $CNOT$, $T$, and Hadamard gates required.

\textbf{Adder}: The circuit for adding in place two $n$-qubit registers, has \cite{draper2006logarithmic}, 
\begin{multline}
 Toffoli\_count_{add} =  10n-3w(n)-3w(n-1)-3\log_2n\\ - 3\log_2(n-1) -7
\end{multline}
where $w(n)$ denotes the number of ones in the binary representation of $n$. For the sake of simplicity, we consider $w(n) = n$ and $w(n-1) = n-1$. With these values, the gate counts for the qubit-only decomposition of the Toffoli gates are given in Table~\ref{tab:adder_qubit}.

\begin{table}[htb]
    \centering
    \caption{Gate counts for qubit-only decomposition of Toffoli gate for adder circuit}
    \begin{tabular}{|c|c|}
    \hline
        Gate & Gate Count  \\
        type & (as a function of $n$)\\
    \hline
        $CNOT$ & $24n - 18\log_2n - 18\log_2(n-1) - 24$\\
        \hline
        $H$ & $8n -6\log_2n - 6\log_2(n-1) -8$\\
        \hline
        $T$ & $14n - 28\log_2n - 28\log_2(n-1) -21$\\
        \hline
    \end{tabular}
    \label{tab:adder_qubit}
\end{table}

By qubit-qutrit decomposition, the Toffoli gate is decomposed using only 1-controlled and 2-controlled ternary CNOT gates. From Fig.~\ref{tof_qutrit} we note that the number of 1-controlled ternary CNOT gates required for a Toffoli decomposition is twice that of 2-controlled ternary CNOT gates. The total number of 1-controlled and 2-controlled ternary CNOT gates required for qubit-qutrit decomposition of the adder circuit, as obtained from \cite{saha2022intermediate}, are:
\begin{eqnarray*}
    \# \text{1-controlled ternary CNOT} &=& 8n -6\log_2n - 6\log_2(n-1) - 8 \\
    \# \text{2-controlled ternary CNOT} &=& 4n-3\log_2n - 3\log_2(n-1) - 4.
\end{eqnarray*}

\subsection{Comparison of resource requirements}
In the previous subsection, it has been shown that qubit-qutrit decomposition is capable of removing the requirement of $T$ gates from the circuit, but the 2-controlled ternary CNOT gate still remains non-transversal like the $T$ gate. Since the $T$ gate is a non-transversal gate for many QECCs, it has a much costlier implementation. Several techniques have been proposed in the literature for efficient fault-tolerant implementation of $T$ gates \cite{postler2022demonstration, litinski2019game}, which primarily aim for surface codes, and often involve complicated processes such as teleportation. For the sake of simplicity,  we consider here the fault-tolerant decomposition of $T$ gates in Steane code \cite{nielsen2002quantum} as shown in Fig.~\ref{fig:T_gate}. The $H$, $T$, and $SX$ gates are implemented transversally to attain the overall effect of an encoded $T$ gate. Therefore, if Steane code is used, then $\kappa_g$ for transversal gates is $7$, whereas that for $T$ gates is $4\times7 = 28$. Note that this implementation requires ancilla qubits. Therefore, if an $n$-qubit circuit involves $m$ $T$ gates, it leads to an overall circuit with $n+m$ qubits.

Let $N_2$ and $N_{2,3}$ denote the total number of gates required for the fault-tolerant implementation of the adder using qubit-only and qubit-qutrit decomposition for a single level of concatenation. The number of gates increases exponentially with the number of levels of concatenation. We use $N_2(g)$ to denote the number of gates $g$ in the fault-tolerant implementation for qubit-only decomposition and $N_{2,3}(g)$ for qubit-qutrit decomposition.

Then for the qubit-only decomposition of an adder, we have
\begin{eqnarray}
\label{eq:adder_binary}
N_2 &=& 7 \times (N_2(CNOT) + N_2(H) \nonumber + 4 \times N_2(T)) \nonumber \\
&=& 616n - 952 log_2 n - 952 log_2 (n-1) - 798.
\end{eqnarray}

We have also discussed earlier that the qubit-qutrit structure can be retained for Steane Code where the 1-controlled ternary CNOT gate is transversal but the 2-controlled ternary CNOT is not.

Then for the qubit-qutrit decomposition of an adder, we have
\begin{eqnarray}
\label{eq:adder_ternary}
N_{2,3} &=& 7 \times N_{2,3}(\text{1-controlled ternary CNOT}) + N_{2,3}(\text{2-controlled ternary CNOT}) \nonumber \\
&=& (56n - 42 log_2 n - 42 log_2 (n-1) - 56) \nonumber \\
&+&  \kappa_g \times (28n - 21 log_2 n - 21 log_2 (n-1) - 28).
\end{eqnarray}

where $\kappa_g$ denotes the number of gates required for the non-transversal implementation of the 2-controlled ternary CNOT gate in Steane code. If a transversal implementation were possible, then $\kappa_g = 1$. In the following subsection, we shall vary the value of $\kappa_g$ to determine the scenarios (i.e. the gate cost of non-transversal implementation of the 2-controlled ternary CNOT gate) for which qubit-qutrit decomposition leads to a lower gate cost. For simplicity, in the numerical analysis, we assume $c_2 = c_3 = c$.

\subsubsection{Numerical analysis}

In Fig.~\ref{fig:adder_fig} we present the numerical results for an adder with varying number of qubits.
\begin{figure}[htb]
    \centering
    \subfigure[50 qubit adder]{
        \includegraphics[scale=0.22]{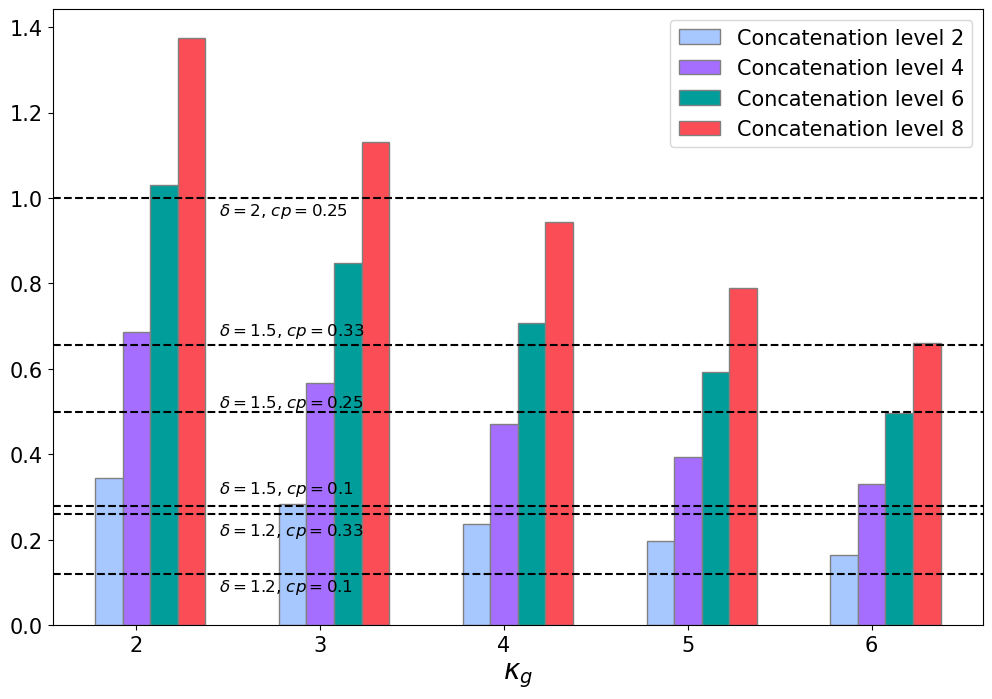}
    }
    \hfill
    \subfigure[100 qubit adder]{
        \includegraphics[scale=0.22]{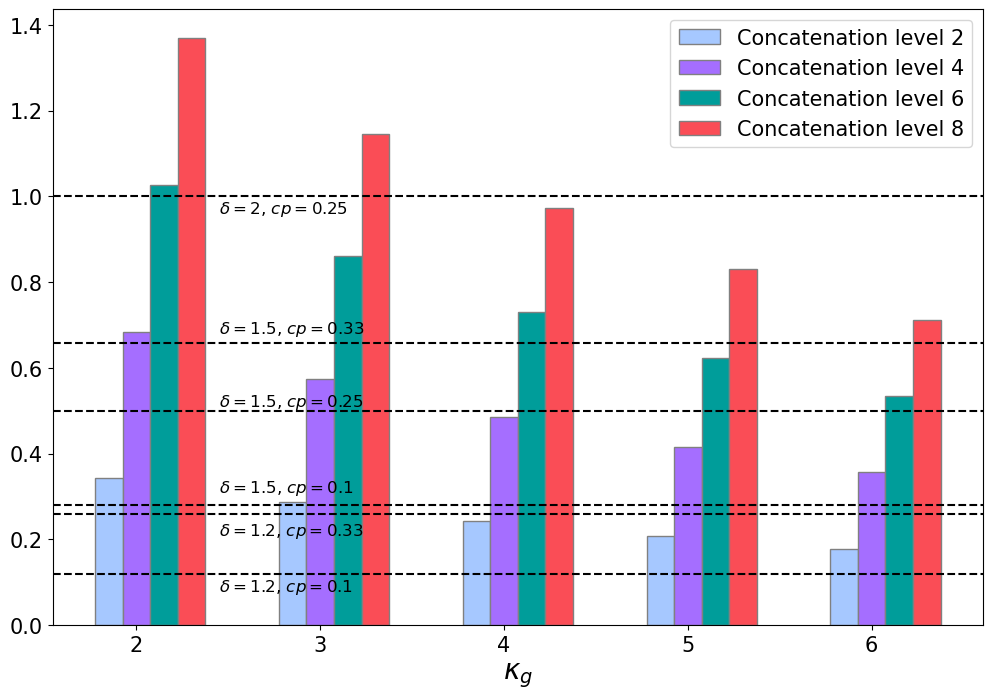}
    }

    \subfigure[300 qubit adder]{
        \includegraphics[scale=0.22]{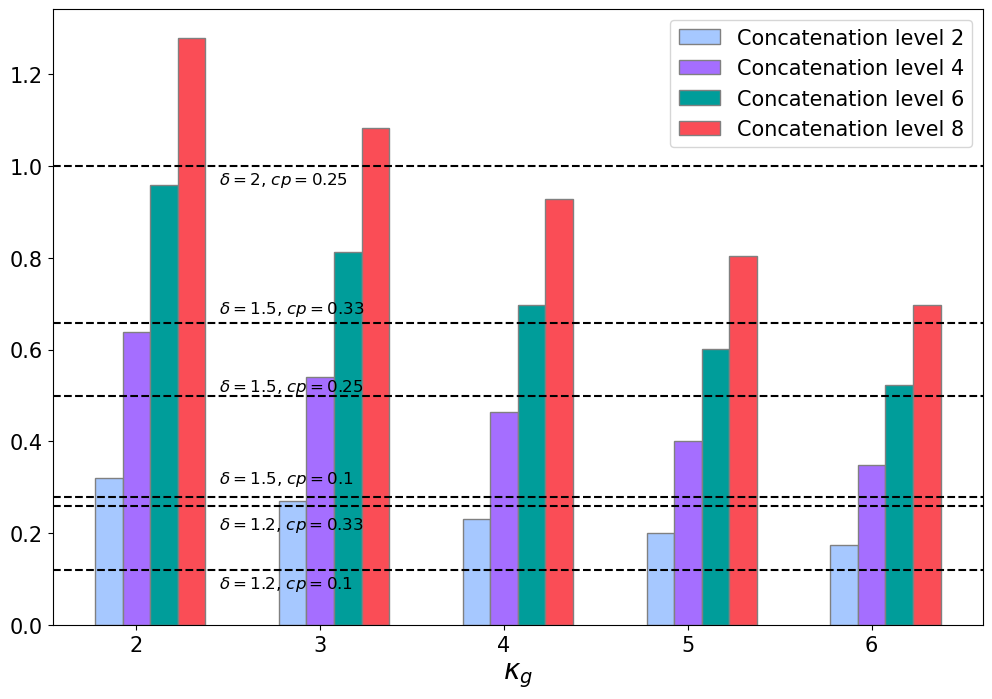}
    }
    \hfill
    \subfigure[800 qubit adder]{
        \includegraphics[scale=0.22]{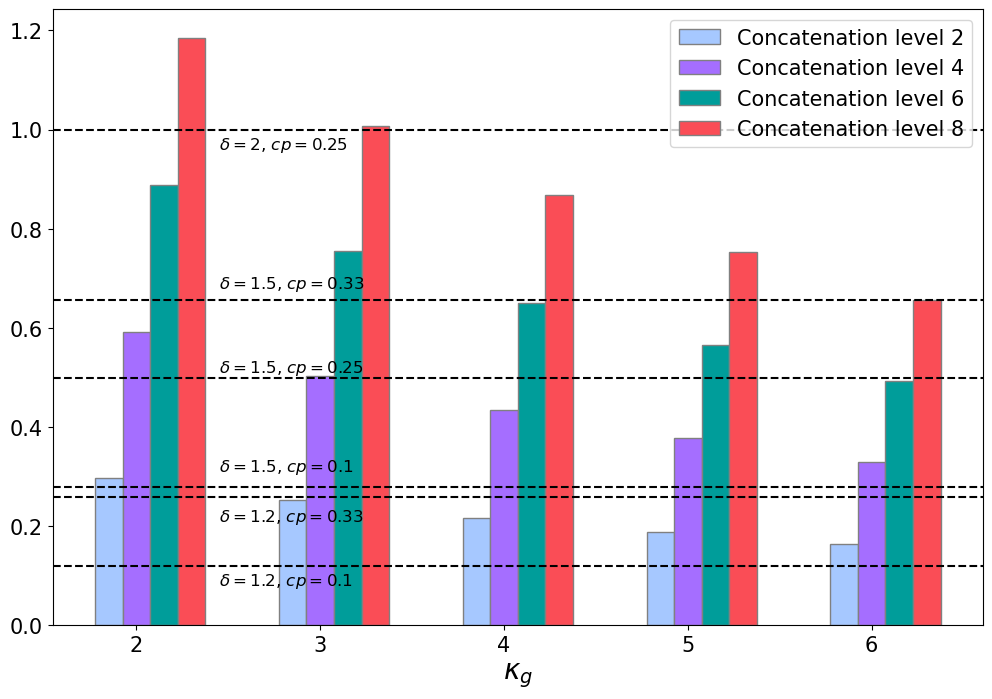}
    }
    \caption{For $2 \leq \kappa_g \leq 6$ (refer to Eq.~(\ref{eq:adder_ternary}),  the values of the RHS of the inequality of Eq.~(\ref{eq:resource_req}) are the heights of the bar-plots. The LHS of the inequality is indicated by the horizontal dashed lines for different values of $\delta$ and $c.p$. The qubit-qutrit decomposition leads to lower resource requirements for certain $\delta$ and $c.p$ when the LHS of Eq.~(\ref{eq:resource_req}) is less than RHS, i.e., the bar plots are higher than the corresponding horizontal dashed line.}
    \label{fig:adder_fig}
\end{figure}
The resource requirement of the qubit-qutrit decomposition is lower than that for the qubit-only decomposition if the LHS of the inequality of Eq.~\ref{eq:resource_req} is less than that of the RHS. In  Fig.~\ref{fig:adder_fig}, the RHS is shown as bars for different values of $\kappa_g$ and levels of concatenation. The LHS for different values of $\delta$ and $c.p$ are shown as horizontal dashed lines. Therefore, the requirement of the inequality translates to the fact that qubit-qutrit decomposition requires a lower number of gates than that for the qubit-only decomposition if the bar is higher than the corresponding horizontal dashed line. The primary observations from the figures are summarized below:

\begin{enumerate}
    \item[(i)] We note that for a fixed value of $\kappa_g$, increasing the level of concatenation increases the height of the bars. This implies that if a higher level of concatenation is used, then the qubit-qutrit decomposition eventually triumphs because of the exponential reduction in the number of gates required for the same. On the other hand, as the value of $\kappa_g$ increases, the heights of the bars lowers. This is also natural since as the cost of the non-transversal implementation of the 2-controlled ternary CNOT gate increases, the benefit of the exponential reduction in gate count by the qubit-qutrit decomposition also lowers.

    \item[(ii)] The value of horizontal plots increases with the value of $\delta$ and $c.p$. In other words, the noisier the qutrit hardware, the more difficult it is to obtain lower resources using qubit-qutrit decomposition. If in the future, the noise profile of qubit and qutrit devices become similar, then we expect that this qubit-qutrit decomposition will lead to lower resources even for a high value of $\kappa_g$.

    \item[(iii)] Finally, as we increase the number of qubits, we note that the height of the bars increases initially and then decreases again. Therefore, apart from the value of $\kappa_g$ and the noise profile of the hardware, the number of qubits also plays a role in determining whether qubit-qutrit decomposition can provide benefit in terms of resource.
\end{enumerate}

This figure illustrates our key observations with an adder circuit. Similar observations may be performed for other circuits of interest that require the decomposition of the  Toffoli gates.

\section{Concluding remarks}
\label{sec:conclusion}

In this paper, we analytically studied the challenges of extending the intermediate qutrit-based decomposition to the fault-tolerant regime. In general, qutrits are noisier than qubits and hence require higher levels of concatenation to attain the same level of accuracy. We derive an analytical formula for which the overall resource requirement of qubit-qutrit decomposition is lower than that of qubit-only decomposition. We provide an inequality for the resource requirement of the qubit-only and qubit-qutrit decomposition which shall govern whether it is useful to apply this method of decomposing Toffoli gates in a fault-tolerant setting.

This study opens up a myriad of research directions. Primarily the study of resources for different binary and ternary QECCs by finding the proper transversal and non-transversal implementation of the gates is a domain of interest. Such a study can provide more insight into the settings where this type of decomposition will be useful in the fault-tolerant era. It is also a domain of interest whether using higher dimensions allows transversal (or non-transversal implementation with a low value of $\kappa_g$) implementation of the 2-controlled ternary CNOT gates. These future prospects are beyond the scope of this article.


\begin{acknowledgements}
The authors have no conflicts of interest to declare that are relevant to the content of this article.
\end{acknowledgements}

\textbf{Data Availability:} Our manuscript has no associated data.

\section*{Appendix}

\subsection*{Proof of Theorem~\ref{thm:same_k}}
\label{app:same_k}

\begin{proof}
Let the accuracy obtained using only qubit and qubit-qutrit decomposition after $k$ levels of concatenations be $\epsilon_2$ and $\epsilon_3$ respectively, where $\epsilon_3 = \delta \cdot \epsilon_2$. Therefore, for a QECC with threshold $\frac{1}{c}$,
\begin{eqnarray}
\frac{1}{c}(c.p_{2,3})^{2^k} & = & \frac{\delta}{c}(c.p_2)^{2^k} \nonumber \\
\Rightarrow 2^k \cdot log(c.p_{2,3}) & = & log(\delta) + 2^k \cdot log(c.p_2) \nonumber \\
\Rightarrow 2^k \cdot log(c.p_{2,3}) - 2^k \cdot log(c.p_2) & = & log(\delta) \nonumber \\
\Rightarrow log(\frac{p_{2,3}}{p_2}) & = & \frac{log(\delta)}{2^k} \nonumber \\
\Rightarrow log(\delta) & = & 2^k log(\frac{p_{2,3}}{p_2}) \nonumber
\end{eqnarray} \qed
\end{proof}

\section*{Proof of Theorem~\ref{thm:same_epsilon}}
\label{app:same_epsilon}

\begin{proof}
The accuracy obtained after $k$ levels of concatenation is $\frac{1}{c}(c.p)^{2^k}$, where $p$ is the probability of error. In our setting, both types of decomposition attain the same accuracy after $k_2$ and $k_3$ levels of concatenations. Therefore,
\begin{eqnarray}
\frac{1}{c}(c.p_{2,3})^{2^{k_3}} & = & \frac{1}{c}(c.p_2)^{2^{k_2}} \nonumber \\
\Rightarrow 2^{k_3 - k_2} log(c.p_{2, 3}) &=& log(c.p_2) \nonumber
\end{eqnarray}
Since $c.p_{2, 3} < 1$ as per the requirement of fault-tolerance, $log(c.p_{2, 3}) < 0$. Dividing both sides by $log(c.p_{2, 3})$ we get
\begin{eqnarray}
2^{k_3 - k_2} & = & \frac{log(c.p_2)}{log(c.p_{2, 3})} \nonumber \\
\Rightarrow k_3 - k_2 &=& log(\frac{log(c.p_2)}{log(c.p_{2, 3})}) \nonumber \\
\Rightarrow k_3 &=& k_2 + log(\frac{log(c.p_2)}{log(c.\delta.p_2)}) \nonumber \\
\Rightarrow k_3 &=& k_2 + log(\frac{log(c.p_2)}{log(\delta) + log(c.p_2)}) \nonumber
\end{eqnarray} \qed
\end{proof}

\section*{Proof of Theorem~\ref{thm:resource}}
\label{app:resource}
\begin{proof}
We require the overall gate count of the qubit-qutrit decomposition to be lower than that of the qubit decomposition. In other words, our requirement is
\begin{equation*}
    \label{eq:gate_comp}
    (\sum_{g \in \mathcal{G}_{2,3}} \kappa_g n_g)^{k_3} \leq (\sum_{g \in \mathcal{G}_{2}} \kappa_g n_g)^{k_2}
\end{equation*}

Now,
\begin{eqnarray}
\label{eq:gate_comp_eq}
(\sum_{g \in \mathcal{G}_{2,3}} \kappa_g n_g)^{k_3} &\leq& (\sum_{g \in \mathcal{G}_{2}} \kappa_g n_g)^{k_2} \nonumber \\
\Rightarrow k_3 log(\sum_{g \in \mathcal{G}_{2,3}} \kappa_g n_g) &\leq& k_2 log(\sum_{g \in \mathcal{G}_{2}} \kappa_g n_g) \nonumber \\
\Rightarrow \frac{k_3}{k_2} &\leq& \frac{log(\sum_{g \in \mathcal{G}_{2}} \kappa_g n_g)}{log(\sum_{g \in \mathcal{G}_{2,3}}\kappa_g n_g)} \nonumber \\
\Rightarrow \frac{k_3}{k_2} - 1 &\leq& \frac{log(\sum_{g \in \mathcal{G}_{2}} \kappa_g n_g)}{log(\sum_{g \in \mathcal{G}_{2,3}}\kappa_g n_g)} - 1 \nonumber
\end{eqnarray}
\begin{equation}
\Rightarrow k_3 - k_2 \leq k_2 \cdot \frac{log(\frac{\sum_{g \in \mathcal{G}_{2}} \kappa_g n_g}{\sum_{g \in \mathcal{G}_{2,3}}\kappa_g n_g})}{log(\sum_{g \in \mathcal{G}_{2,3}}\kappa_g n_g)}
\end{equation}

By Theorem~\ref{thm:same_epsilon}, we have\\ $k_3 = k_2 + log(\frac{log(c.p_2)}{log(\delta) + log(c.p_2)})$. Substituting this in Eq.~(\ref{eq:gate_comp_eq}), we have

\begin{equation*}
log\frac{log(c.p_2)}{log\delta + log(c.p_2)} \leq k_2 \cdot \frac{log\frac{\sum_{g \in \mathcal{G}_{2}} (\kappa_g n_g)}{\sum_{g \in \mathcal{G}_{2,3}}(\kappa_g n_g)}}{log\sum_{g \in \mathcal{G}_{2,3}}(\kappa_g n_g)}
\end{equation*} \qed
\end{proof}

\bibliographystyle{spmpsci} 
\bibliography{Ref}

\end{document}